\newcommand{\be}{
\begin{equation}
	} 
	\newcommand{\ee}{
\end{equation}
} 
\newcommand{\bfg}{
\begin{figure}
	} 
	\newcommand{\efg}{
\end{figure}
} 
\newcommand{\bra}[1]{\langle#1|} 
\newcommand{\ket}[1]{|#1\rangle}
\newcommand{\Hcal}{\mathcal{H}} 
\newcommand{\B}{\mathcal{B}} 
\newcommand{\C}{\mathbb{C}} 
\newcommand{\lpn}{\lambda \dashv n}
\newcommand{\la}{\lambda}
\newcommand{\las}{\bar{\lambda}}
\newcommand{\floor}[1]{\lfloor #1 \rfloor} 
\newcommand{\Tr}{\text{Tr}} 
\newcommand{\ud}{\frac{1}{2}} 
\newcommand{\mean}[1]{\big\langle #1 \big\rangle} 
\newcommand{\TF}[1]{\mathcal{F}\Big(#1\Big)} 
\newcommand{\TFm}[1]{\mathcal{F}^{-1}\Big(#1\Big)}
\newcommand{\hooksA}{ \ytableausetup{boxsize=1.1em} 
\begin{ytableau}
	4 & 3 & 2 & 1 \\
\end{ytableau}
} 
\newcommand{\hooksB}{ \ytableausetup{boxsize=1.1em} 
\begin{ytableau}
	4 & 2 & 1 \\
	1 \\
\end{ytableau}
} 
\newcommand{\hooksC}{ \ytableausetup{boxsize=1.1em} 
\begin{ytableau}
	3 & 2 \\
	2 & 1\\
\end{ytableau}
} 
\newcommand{\hooksD}{ \ytableausetup{boxsize=1.1em} 
\begin{ytableau}
	4 & 1 \\
	2 \\
	1 \\
\end{ytableau}
} 
\newcommand{\hooksE}{ \ytableausetup{boxsize=1.1em} 
\begin{ytableau}
	4\\
	3 \\
	2 \\
	1 \\
\end{ytableau}
}
\newcommand{\dsym}{ \ytableausetup{boxsize=0.4em} 
\begin{ytableau}
	\, & \, & \none[\cdot] & \none[\cdot] & \none[\cdot] & \, \\
\end{ytableau}
}
\newcommand{\deven}{ \ytableausetup{boxsize=0.4em} 
\begin{ytableau}
	\, & \, & \none[\cdot] & \none[\cdot] & \none[\cdot] & \, \\
	\, & \, & \none[\cdot] & \none[\cdot] & \none[\cdot] & \, \\
\end{ytableau}
}
\newcommand{\dodd}{ \ytableausetup{boxsize=0.4em} 
\begin{ytableau}
	\, & \, & \none[\cdot] & \none[\cdot] & \none[\cdot] & \, & \, \\
	\, & \, & \none[\cdot] & \none[\cdot] & \none[\cdot] & \, \\
\end{ytableau}
}
\newcommand{\dhalf}{ \ytableausetup{smalltableaux}
\begin{ytableau}
	0 & 0 & \none[\,] & \none[\cdots] & \none[\,] & 0 \\
	1 & 1 & \none[\,] & \none[\cdots] & \none[\,] & 1 \\
\end{ytableau}
}
\newtheorem{myprop}{Proposition}
\newtheorem{mytheorem}{Theorem}
\begin{document} \sloppy 
\title{All possible permutational symmetries of a quantum system} 
\author{Ludovic Arnaud} \affiliation{13 lotissement le couserans, 09100 La Tour-du-Crieu, France.}
\email{ludovic.p.arnaud@gmail.com}
\begin{abstract}
We investigate the intermediate permutational symmetries of a system of qubits, that lie in between the perfect symmetric and antisymmetric cases. We prove that, on average, pure states of qubits picked at random with respect to the uniform measure on the unit sphere of the Hilbert space are almost as antisymmetric as they are allowed to be. We then observe that multipartite entanglement, measured by the generalized Meyer-Wallach measure, tends to be larger in subspaces that are more antisymmetric than the complete symmetric one. Eventually, we prove that all states contained in the most antisymmetric subspace are relevant multipartite entangled states in the sense that their 1-qubit reduced states are all maximally mixed.
\end{abstract}
\pacs{03.67.-a, 03.67.Mn} 
\maketitle
\date{\today} 
\section{Introduction}
Understanding how information is stored in a quantum system and how it can be extracted is one of the main goals of quantum information science. Because quantum mechanics is often counter-intuitive, this goal is as challenging as it is promising. Historically, the existence of quantum superpositions and the interference they imply were the first aspects of quantum mechanics that confronted  our intuition. When we considered measurements of an individual spin $\frac{1}{2}$ in the vertical direction, the states ${\ket \uparrow}$ and ${\ket \downarrow}$ were easy to interpret classically. However, superposition of states like $({\ket \uparrow}+{\ket \downarrow})/\sqrt{2}$ were puzzling and the statistical interpretation on a lot of copies was the only resort. Nowadays, this superposition is seen just as classical as the {\em up} and {\em down} states. We just rename it ${\ket \rightarrow}$ and consider that it only makes sense to measure it in the horizontal $x$ direction. Performing the measurement in the vertical direction is possible, but it will not give any information at all. It will disturb the system so much that it will be brought in as one of the ``vertical" states with perfect probability.

Then quantum entanglement came into the play \cite{Schroedinger1935,EPR35,Bell1964,Aspect1981} and challenged our intuition even more. The essence of entanglement is well summarized by considering the so-called {\em bipartite entanglement}. Such kinds of entanglement states that the information about a quantum system is not only encoded exclusively in its parts, but it is also encoded in the correlations between the parts. Remarkably, when a bipartite quantum system is maximally entangled, the information appears to be fully encoded in these correlations and no longer in the system’s constituents. Because the different parts of a whole system are located at different spatial positions, bipartite entanglement contradicts local realism. Bipartite entanglement is well understood nowadays. Next comes the question of entanglement when the number of parties is bigger than two, the so called {\em multipartite entanglement} . Without any surprises, multipartite entanglement is much richer than bipartite entanglement, and thus more difficult to understand \cite{Horodecki2009}. It leads to stronger contradictions with local realism than bipartite entanglement \cite{Greenberger1990} and several inequivalent classes of entangled states exist as soon as three qubits are considered \cite{Dur2000}. Multipartite entanglement is also central in several applications like one-way quantum computing \cite{Raussendorf2001}. Its dynamics has revealed a surprisingly large variety of flavors when exposed to a dissipative environment \cite{Aolita2008,Barreiro2010}.

Recently, a lot of work has focused on particular kinds of qubit states, totally invariant under permutation of their qubits. This kind of state is really interesting because they are analytically tractable and easy to work with numerically. They exhibit high entanglement content, especially in terms of their geometric entanglement \cite{Martin2010,Aulbach2010,Markham2011,Baguette2014}, non-local behavior \cite{Wang2012,Wang2013}, convenient representation \cite{Giraud2015} and involvement in experimental setups \cite{Toth2007,Kiesel2007,Thiel2007,Prevedel2009,Wieczorek2009,Chiuri2010}. However, in some aspects, the power of the permutational symmetry is also a weakness. It is a strongly constraining symmetry that a lot of interesting quantum states do not satisfy for more than three qubits, particularly the states that are known for their high entanglement content relative to different kinds of measures \cite{Laflamme1996,Higuchi2000,Brierley2007,Gour2010}. In \cite{Arnaud2013}, it is also demonstrated that a symmetric state of qubits cannot have its reduced states all maximally mixed, except in the case where those reduced states are the smallest possible, i.e., with only one qubit each. For symmetric states, reduced states formed with a pair, triplet, and etc. will never be all maximally mixed. That is surprising because in a given Hilbert space it is always possible to find states with all their reductions that keep about 18\% of the total number of qubits maximally mixed \cite{Arnaud2013}. In the context of quantum error correction, such states are therefore robust to the loss of about 18\% of their qubits because they do not encode information.

For these reasons, it is then quite natural to explore beyond the perfect permutational symmetry by still capturing some of its aspects that make it so convenient. To get some intuition on how to do such a thing, let us consider two qubits seen as two spins $\frac{1}{2}$. It is well known that arbitrary states of such a system are linear combinations of the symmetric components (formed with the three triplets) and an antisymmetric component (formed with the singlet). For more than two qubits, the situation becomes richer because a given qubit can have a symmetric relationship with some qubits and an antisymmetric relationship with some others. A 3-qubit state will decompose in different parts: one part will be indeed a totally symmetric part completely invariant by any permutation of its quits. Then, another part will have symmetry between qubits 1 and 2 but antisymmetry between 1 and 3 and 2 and 3. In the same vein, an other part will have antisymmetry between 1 and 2 but symmetry between 1 and 3 and 2 and 3. Obviously, there will be four other parts corresponding to the four other ways to fix the symmetries between the qubits. Note that, however, it will not contain a completely antisymmetric part because it is impossible to antisymmetrize more than two qubits. The goal of this paper is to study quantum states that have these kinds of intermediate symmetries that lie in between the perfect symmetric and antisymmetric ones. Those symmetries will be described thanks to the formalism of the representations of the symmetric group.

The layout of this paper is the following: Sec. II introduces important notions about the symmetric group and its representations and rigorously defines the intermediate symmetries. In Sec. III, a measure of the amount of intermediate symmetries contained in a quantum state is introduced. Then, all the moments and the probability distribution of that measure are calculated analytically for random states of qubits and confronted with numerical simulation. This statistical analysis shows that the maximal antisymmetry is a generic feature of random quantum states. In Sec. IV, some relationships between quantum states with intermediate symmetries and entanglement are considered. A natural extension of the Majorana representation \cite{Majorana1932} is introduced and some numerical calculations are presented. At the end of the section, a theorem is derived showing that the most antisymmetric states are a special kind of maximally entangled states in the sense that no information is contained in their individual qubits. In other words, those states are quantum error correcting codes robust against any 1-qubit errors. Finally, some conclusions are drawn in Sec. V.

\section{Representation theory of the symmetric group} 
In this part, we will introduce the mathematical background that is going to be used throughout this article. The minimum amount of required concepts and notations will be described briefly and relevant formulas will be written without demonstration. A more rigorous approach with more details can be found in many reference books concerning the symmetric group and its representations \cite{Goodman2009,Boerner1970,Serre1977,Wilton1991}.
 
\subsection{Permutations and cycle notation}
The symmetric group of $n$ elements, noted $S_n$, is the set of permutations of those elements where the composition of permutations plays the role of the group multiplication. There are $n!$ such permutations. Some of these permutations are called {\em cycles} because they exchange the elements of a subset in a circular fashion. The number of elements that are permuted in a given cycle is called the {\em length} of the cycle. For instance, the permutation that transforms the string $\{1,2,3,4,5\}$ to the string $\{3,1,2,4,5\}$ can be seen as a cycle on the subset $\{1,2,3\}$ in the sense that 1 goes to position 2, 2 goes to position 3 and 3 goes to position 1. At the same time, 4 and 5 keep their position. Therefore, it is a cycle of length 3 or similarly a 3-cycle.

Cycles can be written in {\em cycle notation} where the elements are written in between parentheses such that the first element in the parentheses goes to the position of the second, the second goes to the position of the third one, and so on until the last element takes the position of the first one. Like this, the previous permutation is written $(123)$.

What makes cycles interesting is that any permutation can be constructed as a combination of disjoint cycles. The permutation that transforms the string $\{1,2,3,4,5\}$ to the string $\{3,1,2,5,4\}$ can be seen as two cycles, one on the subset $\{1,2,3\}$ and one on the subset $\{4,5\}$, written $(123)(45)$. Note that, in the cycle notation, elements that do not change position, i.e., cycles of size one, are omitted. That is why the permutation considered above that could be written $(123)(4)(5)$ is simply written to $(123)$. Two permutations are said to have the same cycle structure if they are constructed thanks to cycles with identical lengths. For instance, the permutation $(123)(34)$ and $(145)(23)$ are a combination of a 3-cycle and a 2-cycle. 

\subsection{Irreducible representations of $S_n$, partitions, and characters}
There is an interesting fact concerning the symmetric group. Its irreducible representations are in direct correspondence with its conjugacy classes. Moreover, it is possible to show that permutations that belong to the same conjugacy class have the same cycle structure. Therefore, each irreducible representation can be labeled by a quantity which reminds the cycle structure, that is to say a partition $\lambda$ of the number of elements $n$, noted $\lambda \dashv n$. A partition $\lambda$ can be written thanks to a {\em partition vector}, which is a vector with monotonic decreasing entries that sums to $n$,
\begin{eqnarray}
	\lambda&=&(\lambda_1,\cdots,\lambda_n) \text{ with }\lambda_1\ge \cdots \ge \lambda_n \text{ and } \sum_{i=1}^n \lambda_i=n. \quad 
\end{eqnarray}
Another way to represent a partition $\lambda$ can be done thanks to a {\em Young diagram}, a diagram that is constructed by gluing together rows of $\lambda_i$ square boxes from top to bottom, for $i$ in $[1,n]$. As an example for $n=4$, all the partition vectors and their corresponding Young diagrams are given by 
\begin{eqnarray}
	(4,0,0,0) &\equiv& \ydiagram{4}, \, (3,1,0,0) \equiv \ydiagram{3,1}, \, (2,2,0,0) \equiv \ydiagram{2,2},\nonumber\\
	(2,1,1,0) &\equiv& \ydiagram{2,1,1} \, \text{ and } (1,1,1,1) \equiv \ydiagram{1,1,1,1}.\nonumber 
\end{eqnarray}
Partitions are listed here in the {\em inverse lexicographical order}, which means that $\lambda>\lambda'$, if and only if the first non-zero difference $(\lambda_i - \lambda'_i) > 0$ for $i$ in $[1,n]$. The position of a partition with respect to this ordering will be used in some formulas by simply writing it $\lambda$, the context preventing any risk for confusion. Like this, $(4,0,0,0) \equiv 1$ and $(3,1,0,0) \equiv 2$.

From partitions and Young diagrams, several relevant quantities need to be defined. The multiplicity of $i$ of a partition $\lambda$ shorthanded with an exponent index as $\lambda^i$ counts the number of times the value $i$ appears in a partition. For instance, if $\lambda=(3,1,0,0)$, then $\lambda^1=\lambda^3=1$ and $\lambda^2=\lambda^4=0$. The {\em hooks} of a box in a given Young diagram is defined as the set of boxes that are below and to the right of the box, including the considered box itself. The {\em hook-length} of a given box is simply the total number of boxes in its hook. As an example, here are all the Young diagrams for $n=4$ with the values of the hook-length of each box: 
\begin{equation}
	\hooksA, \, \hooksB, \, \hooksC, \, \hooksD, \, \hooksE\label{hooklengths}.\ytableausetup{boxsize=0.6em} 
\end{equation}

Note that we will denote partitions relative to the irreducible representations of the symmetric group with greek indices starting at the letter $\lambda$. Greek indices starting at the letter $\rho$ will be used to represent a given conjugacy class. By extension, any permutations that belong to the same conjugacy class will also be written $\rho$. It is then useful to evaluate the number of permutation in the class $\rho$ as 
\begin{equation}
	|\rho|=\frac{n!}{\prod_{i=1}^n(\rho^i !)i^{\rho^i}}.\label{rhonumber} 
\end{equation}

The character of the permutation $\pi$ in the representation $\lambda$ noted $\chi_{\la}(\pi)$ is a number defined as the trace of the matrix of the permutation $\pi$ in the representation $\lambda$. It forms a {\em class function} in the sense that it has the same value for all the permutations that belong to the same conjugacy class. For that reason, it is common to only consider the values of the character written $\chi_{\la}(\rho)$ associated with the conjugacy class $\rho$ and arrange them together in the matrix with element $\chi_{\lambda \rho}$, called the {\em character table}. In the Appendix, the so-called Frobenius’ formula is presented, which allows us to calculate all those characters. Characters satisfy the two following orthogonality relations:
\begin{eqnarray}
	\sum_{\pi \in S_n} \chi_{\la}(\pi) \chi_{\lambda'}(\pi) &=& n!\,\delta_{\lambda \lambda'}\label{chi1},\\
	\sum_{\lambda \dashv n} \chi_{\la}(\rho) \chi_{\la}(\rho') &=& \frac{n!}{|\rho|}\,\delta_{\rho \rho'}\label{chi2}. 
\end{eqnarray}

\subsection{Schur-Weyl duality} The central concept of this article is called the Schur-Weyl duality. It states that the finite dimension Hilbert space of $n$ qudits $(\C^d)^{\otimes n}$ decomposes as a direct sum of orthogonal subspaces $\Hcal_{\la}$, 
\begin{equation}
	(\mathbb{C}^d)^{\otimes n} \simeq \bigoplus_{\lpn} \Hcal_{\la}. 
\end{equation}
Each $\Hcal_{\la}$ is constructed as the tensor product of the irreducible representation of the special unitary group SU$(d)$ and the irreducible representation of the symmetric group $S_n$, both labeled by the partition $\lambda$. The dimension $f_\lambda$ of the irreducible representations of SU$(d)$ happens to be inversely proportional to the product of all the hook-lengths of $\lambda$ noted $h_{\la}$, 
\begin{equation}	
	f_{\la}=\frac{n!}{h_{\la}}. 
\end{equation}
The dimension $d_\lambda$ of the irreducible representation of the symmetric group $S_n$ can be obtained thanks to the following formula applied to the partition vector $\lambda$: 
\begin{equation}
	\label{dimSn} d_\lambda=\prod_{i<j}^{d}\frac{\lambda_i-\lambda_j+j-i}{j-i}. 
\end{equation}
The dimension $D_{\la}$ of each $\Hcal_{\la}$ is therefore given by the product $f_{\la}d_{\la}$ and it clearly satisfies $\sum_{\la \dashv n}D_\la=d^n$.

To get some intuition about the structure of the subspaces $\Hcal_\la$, it is necessary to understand how the states they contain are constructed. Basically, the states of a possible basis for $\Hcal_\la$ are built by symmetrizing and antisymmetrizing the states of the computational basis following the pattern given by the Young diagram $\lambda$: each row will correspond to a symmetrization and each column will correspond to an antisymmetrization. To visualize the states that spanned $\Hcal_\la$, it is helpful to take each state of the computational basis state, write their arguments as binary indices in the Young diagram, and rearrange the indices according to all the possible permutations. Because of the antisymmetrization, the only states that are going to contribute are the ones leading to permutations of the filled Young diagram with each column containing distinct values. For instance, the subspace labeled by the diagram $\ytableausetup{boxsize=0.6em}\ydiagram{2,1}$ will not be spanned by the state $\ket{000}$. Clearly there is no permutation of $
\ytableausetup{smalltableaux}\begin{ytableau}
0 & 0 \\
0 \\
\end{ytableau}
$ that satisfies the column constrain. On the other hand, the states $\ket{001}$, $\ket{010}$ and $\ket{100}$ will span the subspace because they can be rearranged in the form $
\ytableausetup{smalltableaux}\begin{ytableau}
0 & 0 \\
1 \\
\end{ytableau}
$
compatible with the antisymmetrization. As a consequence of such a construction, it is impossible to antisymmetrize more than $d$ qudits. Therefore, the subspaces labeled by the Young tables that contains columns with more than $d$ boxes are zero dimensional.

To make such a construction more systematic, it is convenient to introduce the set of projectors $P_{\la}$ on each $\Hcal_{\la}$. They can be constructed as 
\begin{equation}
	\label{proj} P_{\la}=\frac{1}{h_{\la}}\sum_{\pi \in S_n} \chi_{\la}(\pi) U_\pi. 
\end{equation}
$U_{\pi}$ is the unitary operator that maps the permutation $\pi$ in the Hilbert space. Its matrix elements are given by 
\begin{equation}
	\label{P2U} (U_{\pi})_{i,j}=\delta_{i\pi(j)}, 
\end{equation}
where $i$ and $j$ are $n$ entries $d$-valued vectors that index the computational basis. By definition, $\Tr(P_{\la})=D_{\la}$ and from Eq. (\ref{chi1}), it is straightforward to verify that the projectors $P_\lambda$ form a set of orthogonal projectors such that 
\begin{equation}
	\label{orthoproj} P_{\la}P_{\lambda'}=\delta_{\lambda \lambda'}P_{\lambda'}. 
\end{equation}
From each $P_{\la}$, an orthonormal basis for each $\Hcal_{\la}$ can be built by applying a Gram–Schmidt process to the set of linear independent columns $(P_{\la})_i$. Such a basis will be noted $\B_{\la}$ with vectors $\ket{b^{\la}_k}$, for $k\in[1,D_{\la}]$. Note that the basis $\B_{(n,0,\cdots,0)}$ of the complete symmetric situation are nothing other that the Dicke states \cite{Dicke1954} and the other basis vector can be seen as their generalization.

\subsection{Example for two qubits} To see how the previous formalism works, let us consider the 2-qubit case. The parameters are thus $n=2$ and $d=2$. The integer $2$ can be decomposed as $2+0$ and $1+1$. Therefore, the Hilbert space decomposes in two subspaces as 
\begin{equation}
	\C^2 \otimes \C^2 \simeq \Hcal_{(2,0)} \oplus \Hcal_{(1,1)}. 
\end{equation}
To construct the projectors $P_{(2,0)}$ and $P_{(1,1)}$, we first need to express all the relevant quantities concerning $S_2$. This group contains only two elements: the identity element written $e$ and the transposition $(12)$. The corresponding unitary operators can be calculated from Eq. (\ref{P2U}) as $U_{e}$ simply being the identity operator and $U_{(1,2)}$ being nothing other than the SWAP operator. In matrix form 
\begin{equation}
	U_{e}=
	\begin{pmatrix}
		1 & 0 & 0 & 0 \\
		0 & 1 & 0 & 0 \\
		0 & 0 & 1 & 0 \\
		0 & 0 & 0 & 1 \\
	\end{pmatrix}
	\text{ and } \,U_{(12)}=
	\begin{pmatrix}
		1 & 0 & 0 & 0 \\
		0 & 0 & 1 & 0 \\
		0 & 1 & 0 & 0 \\
		0 & 0 & 0 & 1 \\
	\end{pmatrix}
	.\nonumber 
\end{equation}
The character table of $S_2$ can be calculated for instance thanks to the Frobenius’ formula (see the Appendix) 
\begin{equation}
	\begin{array}{|c | c c |}
		\hline
		\rho \diagdown \lambda & (2,0) & (1,1) \\
		\hline
		e & 1 & 1 \\
		(12) & 1 & -1 \\
		\hline
	\end{array}
	\nonumber 
\end{equation}
and from the definition of the hook-length, one can calculate that $h_{(2,0)}=h_{(1,1)}=2$. Equation (\ref{proj}) gives the projectors 
\begin{eqnarray}
	P_{(2,0)}&=&\frac{1}{2}(U_{e}+U_{(12)})=
	\begin{pmatrix}
		1 & 0 & 0 & 0 \\
		0 & \ud & \ud & 0 \\
		0 & \ud & \ud & 0 \\
		0 & 0 & 0 & 1 \\
	\end{pmatrix},
	\\
	P_{(1,1)}&=&\frac{1}{2}(U_{e}-U_{(12)})=
	\begin{pmatrix}
		0 & 0 & 0 & 0 \\
		0 & \ud & -\ud & 0 \\
		0 & -\ud & \ud & 0 \\
		0 & 0 & 0 & 0 \\
	\end{pmatrix}. 
\end{eqnarray}
Taking the traces of the operators gives the dimension of the subspaces 
\begin{eqnarray}
	D_{(2,0)}&=&\Tr(P_{(2,0)})=3\nonumber\\
	D_{(1,1)}&=&\Tr(P_{(1,1)})=1. 
\end{eqnarray}
The three linear independent columns of $P_{(2,0)}$ and the only distinct columns of $P_{(1,1)}$ already form an orthogonal basis. After normalization we obtain the basis 
\begin{eqnarray}
	\B_{(2,0)}&=&\{\ket{00},\frac{\ket{01}+\ket{10}}{\sqrt{2}},\ket{11}\}\label{symbase2},\\
	\B_{(1,1)}&=&\{\frac{\ket{01}-\ket{10}}{\sqrt{2}}\}.\label{antisymbase2} 
\end{eqnarray}

Such a decomposition is quite natural in the context of 1/2 spins. $\Hcal_{(2,0)}$ corresponds to the {\em symmetric subspace} and its basis states are the triplet states. $\Hcal_{(1,1)}$ corresponds to the {\em antisymmetric subspace} that contains the unique singlet states. It is interesting to notice that this subspace is maximally entangled because the singlet states is a Bell state.
\ytableausetup{boxsize=0.4em}
When the number of qudits is bigger, arbitrary partitions lead to {\em intermediate symmetry}. Those intermediate symmetries mean that a given qudit will be symmetrized with respect to a group of qubits and antisymmetrized with another group. To be more precise, a given intermediate symmetry will be called $\la$-symmetry and states that will belong to an individual subspace $\Hcal_\la$ will be called $\la$-symmetric states. Studying those states in the context of quantum information is the point of the next two parts.

\section{Qubits states and $\lambda$-symmetry} 
\subsection{Generalities} From now on, we will only focus on the qubits case ($d=2$) and write the size of the whole Hilbert space $N=2^n$. It simplifies a lot of the formulas presented in the previous part. One simplification comes from the fact that it is impossible to antisymmetrize more than two qubits. In terms of Young diagrams, it means that only diagrams with no more than two rows will lead to non-zero dimensional subspaces. Starting from the diagram $\overbrace{\dsym}^{n}$ , the most ``antisymmetric" partition will be represented by the diagram $\overbrace{\deven}^{n/2}$ if $n$ is even, and $\overbrace{\dodd}^{n/2 + 1}$ if $n$ is odd. In total there are $\floor{\frac{n}{2}}+1$ such diagrams. To calculate the dimension $D_\la$, the hook-lengths of those diagrams need to be calculated. In terms of the index of the partition with respect to the inverse lexicographical order $\la$, the number of boxes in the first row is given by $n-\la + 1$. The number of boxes in the second row is given by $\la - 1$. The hook-lengths are thus all given according to the pattern 
\begin{equation}
	\label{qubithook} \ytableausetup{boxsize=3.4em} 
	\begin{ytableau}
		\underset{+1}{\scriptstyle n - \la + 1} & \underset{+1}{\scriptstyle n - \la} & \none[\hspace{-0.5cm}\dots]\\
		\scriptstyle \la -1 & \scriptstyle \la -2 & \none[\hspace{-0.5cm}\dots]\\
	\end{ytableau}
	\hspace{-0.5cm} 
	\begin{ytableau}
		\underset{+1}{\scriptstyle n - 2\la + 3} & \scriptstyle n - 2\la + 2 & \none[\hspace{-0.5cm}\dots]\\
		\scriptstyle 1 \\
	\end{ytableau}
	\hspace{-0.5cm} 
	\begin{ytableau}
		\scriptstyle 2 & \scriptstyle 1 \\
	\end{ytableau}
	.\nonumber \ytableausetup{boxsize=0.6em} 
\end{equation}
The product of all those hook-lengths is 
\begin{eqnarray}
	h_\la&=&\frac{(n-\la+2)!}{(n - 2\la + 3)!}(\la-1)!(n-2\la+2)!\nonumber\\
	&=&\frac{(n-\la+2)!(\la-1)!(n-2\la+2)!}{(n - 2\la + 3)(n - 2\la + 2)!}\nonumber\\
	&=&\frac{(n-\la+2)!(\la-1)!}{(n - 2\la + 3)},\nonumber 
\end{eqnarray}
and Eq. (\ref{dimSn}) that contains only one factor gives 
\begin{equation}
	d_\lambda=\frac{(n-\la+1)-(\la-1)+2-1}{2-1}=n-2\la +3.\nonumber 
\end{equation}
Finally the dimensions of each $\Hcal_\la$ is given by 
\begin{eqnarray}
	\label{dim} D_\lambda&=&f_\la d_\la=(n - 2\la +3)\frac{n!(n - 2\la + 3)}{(n-\la+2)!(\la-1)!}\nonumber\\
	&=&\frac{(n - 2\la +3)^2 n!}{(n-\la+2)!(\la-1)!}\nonumber\\
	&=&\frac{(n - 2\la +3)^2 n!}{(n-\la+2)(n-(\la-1))!(\la-1)!}\nonumber\\
	&=&\frac{(n - 2\la +3)^2 n!}{(n-\la+2)(n-(\la-1))!(\la-1)!}\nonumber\\
	&=&\frac{(n - 2\la +3)^2}{(n-\la+2)}\binom{n}{\la-1}.\label{qubitdim} 
\end{eqnarray}

\subsection{Random qubits states and $\lambda$ symmetry}

Any quantum state can be seen as a linear combination of the states $\ket{b_i}$. In other words, any quantum state is a superposition of the different $\la$ symmetries. For instance, the separable state $\ket{01}$ is the superposition of the two Bell states $\ket{01}+\ket{01}/\sqrt{2}$ and $\ket{01}-\ket{01}/\sqrt{2}$. Therefore, this state contains the same amount of symmetry and antisymmetry. To quantify this, the following quantity is defined: 
\begin{equation}
	\label{w} w_\la(\psi)= ||P_\la \ket{\psi}||^2=\bra{\psi}P_\la P_\la\ket{\psi}=\bra{\psi}P_\la\ket{\psi}. 
\end{equation}
We will call this quantity the {\em weight} of $\la$ symmetry. It is just the square of the norm of the components in the subspaces $\Hcal_\la$. The more a state is $\la$ symmetric, the bigger the weight is. Note that by definition $\sum_{\la \dashv n}w_{\la}(\psi)=1$. Calculating this quantity for arbitrary states would not give more insight. That is why we will evaluate it for random states and consider related statistical quantities. The first of this quantity will be the mean values $\mu_1=\mean{w_\la}$, where $\mean{\cdots}$ stands for the average over the uniform measure $d\psi$ on the units sphere $\sum_{i} |\psi_i|^2=1$
in the whole Hilbert space. From Eq. (\ref{w}), we get 
\begin{equation}
	\mu_1=\mean{\sum_{ij}\psi_i^*(P_\la)_{ij}\psi_j}=\sum_{ij}\mean{\psi_i^* \psi_j} (P_\la)_{ij},\label{wsuite}\\ 
\end{equation}
where the indices $i$ and $j$ label computational basis states and $(P_\la)_{ij}$ are the matrix element of $P_\la$ of in that basis. Terms of the form $\mean{\psi_i^* \psi_j}$ can be calculated by different ways like the diagrammatic method described in \cite{Aubert03}. Product of component of quantum states averaged over $d\psi$
\begin{equation}\label{average}
	\mean{\psi_{i_1}^* \cdots \psi_{i_k}^*\psi_{j_1}\cdots \psi_{j_k}}=\int \psi_{i_1}^* \cdots \psi_{i_k}^*\psi_{j_1}\cdots \psi_{j_k}\,d\psi
\end{equation}
are nonzero if and only if each $i$ index has a corresponding $j$ index with the same value. In other words, when the string $\{i_1,i_2,\cdots,i_k\}$ is a permutation of the string $\{j_1,j_2,\cdots,j_k\}$, those strings are the binary form of the indices $i$ and $j$, respectively. It is important to notice that such permutation acts on the set of $j$ indices and have nothing to do with the previously considered permutations that act on the qubits. After the indices condition is fulfilled, the average value from Eq. (\ref{average}) takes the simplified form 
\begin{equation}
	\mean{|\psi_{i_1}|^{2}|\psi_{i_2}|^{2} \cdots |\psi_{i_k}|^{2}}.\label{average2} 
\end{equation}
Its value is now dependent on the possible degeneracy of the indices $i$. When those degeneracies are taken into account, Eq. (\ref{average2}) takes the form 
\begin{equation}
	\mean{|\psi_{i_1}|^{2l_1}|\psi_{i_2}|^{2l_2} \cdots |\psi_{i_m}|^{2l_m}}\,\text{ with } \sum_{j=1}^{m}l_j=k,\label{average3} 
\end{equation}
with a value that does not depend on the indices $i$, but only on the powers $l$. It will be written $F(l_1,l_2,\cdots,l_m)$. In \cite{Aubert03}, a similar expression is considered for unitary matrices drawn uniformly with respect to the Haar measure of the unitary group. However, a given column of such a matrix happens to be random states uniformly distributed over $d\psi$. By consequence, results that appears in \cite{Aubert03} for random unitary matrix can be directly used for random states giving
\begin{equation}
	F(l_1,l_2,\cdots,l_m)=\frac{(N-1)!\,l_1!\,l_2!\,\cdots l_m!}{(N+k-1)!}.\label{F} 
\end{equation}
For instance, $\mean{\psi_i^* \psi_j}$ will only be non-zero if $i=j$, which then leads to the term 
\begin{equation}
	\mean{|\psi_i|^2}=F(1)=\frac{(N-1)!}{N!}=\frac{1}{N}.\nonumber 
\end{equation}

It is then straightforward to continue the calculation of $\mu_1$. From Eq. (\ref{wsuite}) we get 
\begin{eqnarray}
	\mu_1&=&\sum_{i}\frac{1}{N}(P_\la)_{ii}=\frac{1}{N}\Tr(P_\la)\nonumber\\
	&=&\frac{D_\la}{N}. 
\end{eqnarray}
Therefore, for uniform states, the average weight of $\la$-symmetry is just the ratio between the dimensions of the subspace and the whole Hilbert space, as it could have been intuitively expected.

\begin{figure*}
	\begin{center}\includegraphics[width=6cm]{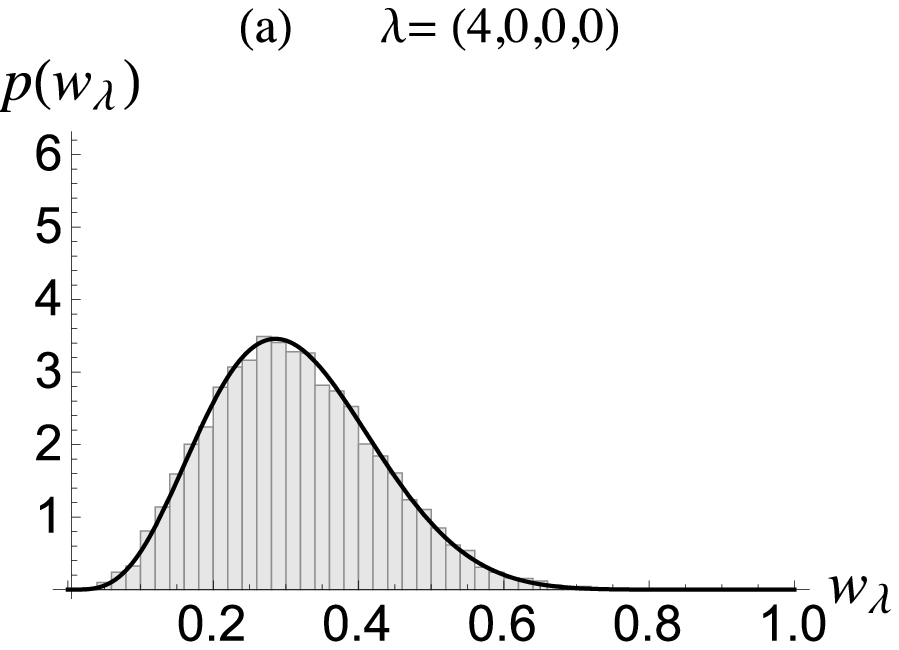}\includegraphics[width=6cm]{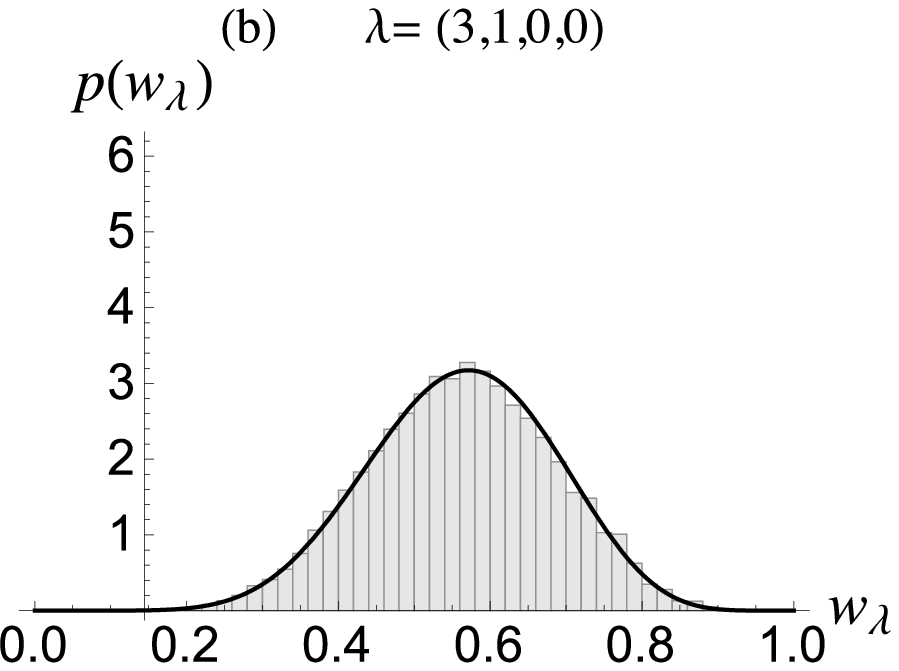}\includegraphics[width=6cm]{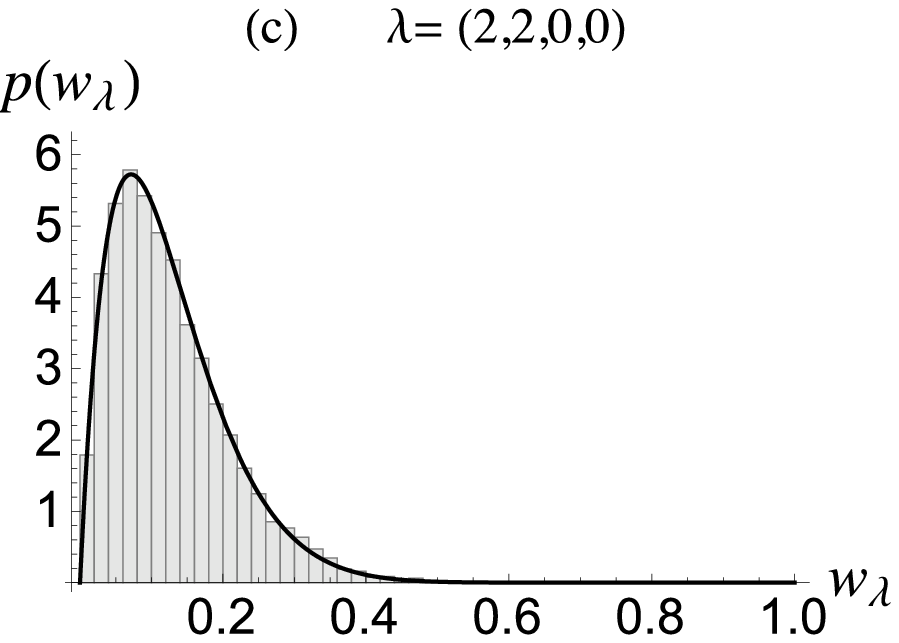}
	    \caption{Comparison of the analytical (curves) and numerical (histograms) distributions of $p(w_\la)$ for the three possible symmetries of four qubits. The numeric calculation used $10^4$ random states uniformly drawn with respect to the invariant measure of the unit sphere in the Hilbert space.}
	\label{fig1} 
	\end{center}
\end{figure*}

The second moment $\mu_2=\langle w_\la^2 \rangle$ can also be calculated analytically as 
\begin{eqnarray}
	\mu_2&=&\left\langle \sum_{i_1j_1}\psi_{i_1}^*(P_\la)_{i_1j_1}\psi_{j_1}\sum_{i_2j_2}\psi_{i_2}^*(P_\la)_{i_2j_2}\psi_{j_2} \right\rangle\nonumber\\
	&=&\sum_{i_1j_1i_2j_2}\langle\psi_{i_1}^*\psi_{i_2}^*\psi_{j_1}\psi_{j_2}\rangle (P_\la)_{i_1j_1}(P_\la)_{i_2j_2}.\nonumber 
\end{eqnarray}
The sum decomposes into three non-zero parts. First, the part where $i_1=j_1$ and $i_2=j_2$ but $i_1 \ne i_2$. Second, the part where $i_1=j_2$ and $i_2=j_1$ but $i_1 \ne i_2$. Finally, the part where $i_1=j_1=i_2=j_2$. It reads 
\begin{eqnarray}
	\mu_2&=&\sum_{i_1 \ne i_2}\mean{|\psi_{i_1}|^2|\psi_{i_2}|^2} (P_\la)_{i_1i_1}(P_\la)_{i_2j_2}\nonumber\\
	&+&\sum_{i_1 \ne i_2}\mean{|\psi_{i_1}|^2|\psi_{i_2}|^2} (P_\la)_{i_1i_2}(P_\la)_{i_2j_1}\nonumber\\
	&+&\sum_{i_1}\mean{|\psi_{i_1}|^4} (P_\la)_{i_1i_1}(P_\la)_{i_1i_1}. 
\end{eqnarray}
From Eq. (\ref{F}) we have 
\begin{eqnarray}
	\mean{|\psi_{i_1}|^2|\psi_{i_2}|^2} &=& F(1,1) = \frac{(N-1)!}{(N+1)!} = \frac{1}{N(N+1)}, \nonumber\\
	\mean{|\psi_{i_1}|^4} &=& F(2) = \frac{(N-1)!2!}{(N+1)!} = \frac{2}{N(N+1)}, \nonumber 
\end{eqnarray}
and then 
\begin{eqnarray}
	\mu_2&=&\frac{1}{N(N+1)}\Big(\sum_{i_1 \ne i_2}(P_\la)_{i_1i_1}(P_\la)_{i_2j_2}\Big.\nonumber\\
	&+&\Big.\sum_{i_1 \ne i_2}(P_\la)_{i_1i_2}(P_\la)_{i_2j_1}+2\sum_{i_1} (P_\la)_{i_1i_1}(P_\la)_{i_1j_1}\Big)\nonumber\\
	&=&\frac{1}{N(N+1)}\Big(\sum_{i_1 i_2}(P_\la)_{i_1i_1}(P_\la)_{i_2j_2}\Big.\nonumber\\
	&+&\Big.\sum_{i_1 i_2}(P_\la)_{i_1i_2}(P_\la)_{i_2j_1}\Big)\nonumber\\
	&=&\frac{1}{N(N+1)}\Big(\Tr(P_\la)Tr(P_\la)+Tr(P_\la^2)\Big)\nonumber\\
	&=&\frac{D_\la(D_\la+1)}{N(N+1)}.\label{w2} 
\end{eqnarray}
By induction, it is possible to show that the $k$th-moment $\mu_k=\mean{w_\la^k}$ writes 
\begin{equation}
	\mu_k=\frac{D_\la(D_\la+1)\cdots(D_\la+k-1)}{N(N+1)\cdots(N+k-1))}=\frac{(D_\la)_{(k)}}{(N)_{(k)}}\label{wk} 
\end{equation}
where $(x)_{(k)}=x(x+1)\cdots(x+k-1)$ are the Pochhammer symbols \cite{Erdelyi54,Abramowitz64}. The demonstration is detailed in the Appendix.

The fact that all the moments can be calculated is quite interesting. It gives a hope in the possibility to calculate the distribution of probability of the weights $p(w_\la)$ itself. In general, the collection of all the moments does not uniquely determine the distribution. However, if the moments satisfy some particular properties, such a uniqueness relationship exists. In \cite{Wu2002}, it is shown that if 
\begin{equation}
	\lim_{k \to \infty} \frac{1}{k}\left|\frac{\mu_{k+1}}{\mu_{k}}\right| 
\end{equation}
is finite, then the moments uniquely determine the probability distribution. From Eq. (\ref{wk}) we see that 
\begin{eqnarray}
	\lim_{k \to \infty} \frac{1}{k}\left|\frac{(D_{\la})_{(k+1)}(N)_{(k)}}{(N)_{(k+1)}(D_{\la})_{(k)}}\right|&=&\lim_{k \to \infty} \frac{1}{k}\left|\frac{D_\la+k}{N+k}\right|\nonumber\\
	&=&\lim_{k \to \infty}\frac{1}{k}\to0,\nonumber 
\end{eqnarray}
which is clearly finite. Therefore, the moment $\mu_k$ uniquely characterizes the probability distribution $p(w_\la)$ and it is worth trying to calculate it analytically from them. To do so, we need to introduce the characteristic function
\begin{eqnarray}
	\phi(t)=\mean{e^{itw}}&=& \sum_{k=0}^{\infty}\frac{(it)^k \mean{w^k}}{k!}=\sum_{k=0}^{\infty}\frac{(it)^k \mu_k}{k!}\nonumber\\
	&=&\int_{0}^{1} e^{itw}p(w)\,dw=\int_{-\infty}^{\infty} e^{itw}p(w)\,dw\nonumber\\
	&=&\TFm{p(w)}.\nonumber
\end{eqnarray}
\begin{figure*}
	\begin{center}
	\includegraphics[width=4.5cm]{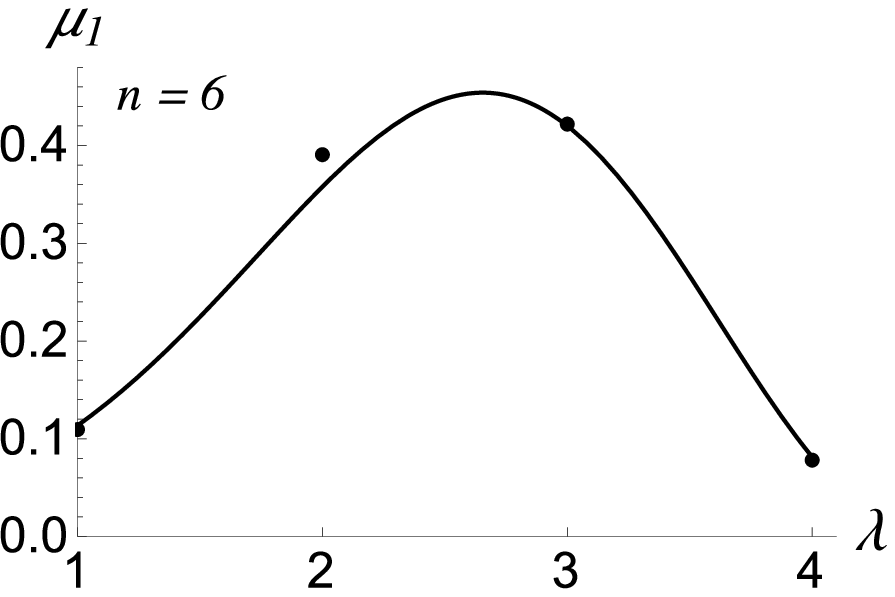}\includegraphics[width=4.5cm]{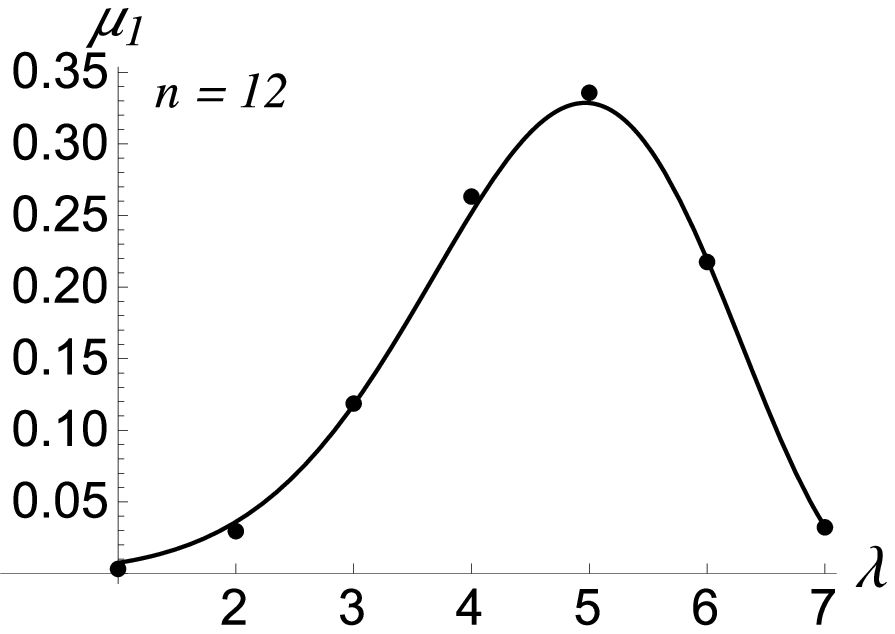}\includegraphics[width=4.5cm]{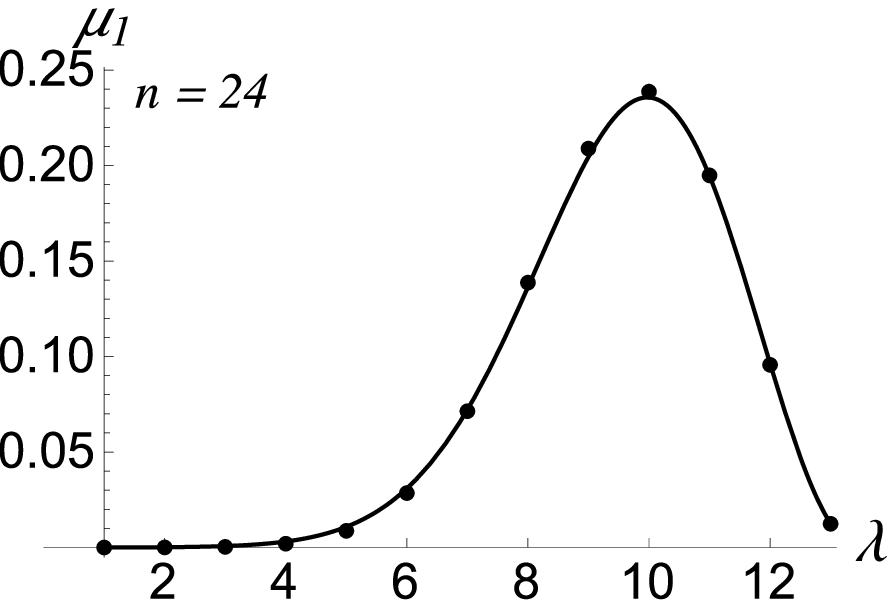}\includegraphics[width=4.5cm]{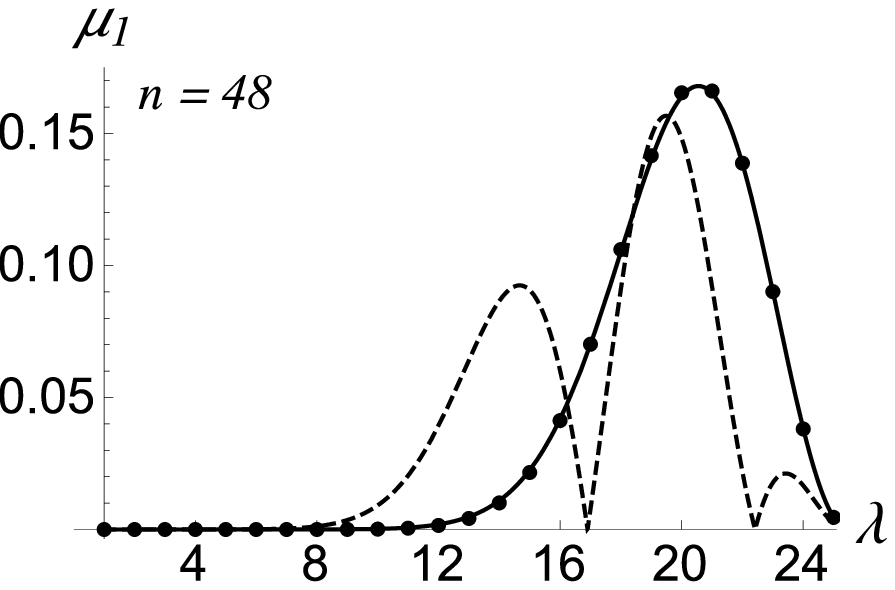}
	\caption{Comparison of the exact (dots) and approximate (solid curves) value of $\mu_1$ as a function of the representation label $\la$ for a different number of qubits. The dashed curve on the fourth figure represents 100 times the absolute value of the difference between the exact and the approximate value.}
\label{fig2} 
\end{center}
\end{figure*}
Note that, for simplicity, the index $\la$ is removed from the variable $w$. The characteristic function can be seen both as a series in the moments and as the inverse Fourier transform of the probability distribution. The strategy is the following: we first try to calculate the series, and then by taking its Fourier transform, we obtain the probability distribution. Knowing all the moments (\ref{wk}), the characteristic function written as a series 
\begin{equation}
	\phi(t)=\sum_{k=0}^{\infty}\frac{(it)^k}{k!}\frac{(D_\la)_{(k)}}{N_{(k)}}=_1^1F(D_\la,N,t)\label{serie} 
\end{equation}
happens to be the confluent hypergeometric function $\,_1^1F(a,b,it)$ with parameter $a=D_\la$ and $b=N$. The Fourier transform of such a function can be found in \cite{Erdelyi54} and finally we found the simple expression
\begin{eqnarray}
	p(w_\la)&=&\TF{\,_1^1F(D_\la,N,t)}\nonumber\\
	&=&\frac{w_\la^{D_\la}(1-w_\la)^{N-D_\la-1}}{B(D_\la,N-D_\la)}\,\forall w_\la \in [0,1], 
\end{eqnarray}
where the normalization factor $B(x,y)$ is the beta function defined for integer variables as $B(x,y)=\frac{(x-1)!(y-1)!}{(x+y-1)!}$. The calculated probability distribution is thus a {\em beta distribution}. Fig. \ref{fig1} compares this analytic distribution to the one calculated numerically by sampling $10^4$ random states. It is also interesting to consider such a distribution asymptotically. When the number of qubits goes to infinity, this distribution will be well approximated by a Dirac distribution centered in the mean value $\mu_1$. That is why $\mu_1$ is the moment that captures the most of the distribution and also deserves to be considered in the asymptotic limit. From (\ref{qubitdim}) and (\ref{w}), $\mu_1$ as a function of $n$ and $\la$ reads
\begin{equation}
	\mu_1=\frac{D_\la}{2^n}=\frac{(n - 2\la +3)^2}{2^n(n-\la+2)}\binom{n}{\la-1}.\nonumber\\
\end{equation}
In the asymptotic limit $n \to \infty$ the binomial coefficient can be approximated by a Gaussian function \cite{Abramowitz64} and then
\begin{eqnarray}\label{dimapprox}
\mu_1&\simeq&\frac{(n - 2\la +3)^2}{2^n(n-\la+2)}\frac{2^n}{\sqrt{\frac{1}{2}\pi n}}e^{-\frac{(\la-1-\frac{n}{2})^2}{\frac{n}{2}}}\nonumber\\
&\simeq&\frac{(n - 2\la +3)^2}{(n-\la+2)}\sqrt{\frac{2}{\pi n}}e^{-\frac{(\la-1-\frac{n}{2})^2}{\frac{n}{2}}}=\tilde{\mu_1}(\la)\label{mu1approx}
\end{eqnarray}
The quality of such an approximation can be observed in Fig. \ref{fig2} and the limit $n \to \infty$ is reached rapidly as $n$ increases. Especially on the figure corresponding to $n=48$, the dashed curve represents 100 times the absolute value of the difference between the exact and the approximate value. We see that even in the worst case, the error is less than $1\%$ of the exact value of $\mu_1$. We also notice that the profile of $\mu_1$ becomes more and more peaked around a given value of $\la$ that we can note $\la^*$. By using the approximation Eq. (\ref{mu1approx}), we can obtain the value of $\la^*$. The first derivative of $\tilde{\mu_1}(\la)$ with respect to $\la$ can be calculated and can be written in the form

\begin{equation}
	\frac{\partial \tilde{\mu_1}(\la)}{\partial \la}=A(\la)(a \la^3 + b\la^2 + c\la + d),
\end{equation}
where $A(\la)$ is a function that never vanishes on the range of $\la$ and
\begin{eqnarray}
	a&=&  -8 \nonumber\\
	b&=&  16 n + 36 \nonumber\\
	c&=&  -10 n^2 - 44 n - 52 \nonumber\\
	d&=&  2 n^3 + 11 n^2 + 27 n + 24.\nonumber
\end{eqnarray}
Solving such a cubic equation is straightforward on a computer even if the exact result is quite complicated to write as a function of $n$. When $n \to \infty$, the value of $\la^*$ behaves like
\begin{equation}
	\la^*(n)\simeq 0.49 n - 17.96,
\end{equation}
which means that it is really close to the biggest value of $\la$ as $n$ increases. 

All this statistical study shows is that in the asymptotic limit uniform random states are almost as antisymmetric as they could be, which is really close to the possible maximal value $\floor{\frac{n}{2}}+1$. This result is similar to those studying the amount of entanglement in random quantum states \cite{Cappellini2006,Dahlsten2014}.

\section{$\la$ symmetries and entanglement}
Recently, a lot of articles studied symmetric states, especially focusing on their multipartite entanglement and their non-locality \cite{Bastin2009,Martin2010,Aulbach2010,Markham2011,Baguette2014,Wang2012,Wang2013,Giraud2015,Arnaud2013}. In our context, those states are the ones contained in the subspace $\Hcal_{(n,0,\cdots,0)}$. They are really convenient to work with both analytically and numerically. This is mainly due to the fact that the dimension of $\Hcal_{(n,0,\cdots,0)}$ scales linearly with the number of qubits as it can be checked easily from Eq. (\ref{qubitdim}). For qubits, we found $D_{(n,0,\cdots,0)}=n+1$. Unfortunately, the power of such states is also their weakness. The complete permutational symmetry is quite constraining and a lot of known maximal entangled states do not satisfy it. Another inconvenience of this class of states is that they are locally equivalent to states that are not contained in the symmetric subspace. That is why considering the other subspace $\Hcal_\la$ makes sense to study.

\subsection{Extension of the Majorana representation}
When dealing with symmetric states, it is common to use the Majorana representation \cite{Majorana1932}. This representation has been used intensively in recent papers \cite{Bastin2009,Martin2010,Aulbach2010,Markham2011,Baguette2014,Wang2012,Wang2013,Giraud2015,Arnaud2013}. With our notations, this representation simply consists of the association of a product state $\ket{\Phi}=\ket{\phi_1}\cdots\ket{\phi_n}$ to a (in general entangled) state $P_{(n,0,\cdots,0)}\ket{\Phi}$. It is natural to extend the construction to the other $\la$-subspaces. Starting with a product state $\ket{\Phi}$, we construct all the states of the form $P_{\la}\ket{\Phi}$. Note that, in such constructions involving projections, normalization of the projected state is implicit. As we have already noticed, the product state $\ket{\Phi_2}=\ket{01}$ will lead to two Bell states when projected in the symmetric and anti-symmetric subspaces. For three qubits, the projection of the state $\ket{\Phi_3}=\ket{001}$ in the symmetric subspace gives the $\ket{W}$ state that maximizes the geometric entanglement \cite{Wei2003,Aulbach2010}.

The four qubits case is quite interesting. Let us consider the product state
\begin{equation}
	\ket{\Phi_4}=\ket{0}(\alpha\ket{0}+\beta\ket{1})(\alpha\ket{0}+\omega\beta\ket{1})(\alpha\ket{0}+\omega^2\beta\ket{1}),\nonumber
\end{equation}
with $\alpha=-1/3$, $\beta=2\sqrt{2}/3$, and $\omega=e^{2\pi i/3}$. It is straightforward to check that the projection of this state on $\Hcal_{(4,0,0,0)}$ and $\Hcal_{(2,2,0,0)}$ gives, respectively,
\begin{eqnarray}
P_{(4,0,0,0)}\ket{\Phi_4}&\to&\ket{T},\nonumber\\
P_{(2,2,0,0)}\ket{\Phi_4}&\to&\ket{HS}.\nonumber
\end{eqnarray}ß
Those two states are known to be the states that maximize different measures of entanglement in the whole Hilbert space \cite{Gour2010}. Note that, in \cite{Gour2010}, a state locally equivalent to $\ket{T}$ is actually considered. $\ket{T}$ is also known to maximize the geometric entanglement in the symmetric subspace \cite{Aulbach2010}. It is interesting to observe that there exists such a product state $\ket{\Phi_4}$ being the superposition of states with high entanglement content. For bigger sizes, it is not easy to pick relevant product states that lead to high entanglement projections. Investigating the extension of the Majorana representation would deserved an entire study. However, it is not the focus of this article. Instead, a numerical approach to observe what is happening for bigger size will be considered in the next part.

\subsection{Numerical approach}
In this part, we will again to calculate numerically relevant entanglement measures over random ensembles of quantum states. We have to keep in mind that such numerical calculations are limited because constructing projectors according to Eq. (\ref{proj}) involves a factorial number of operations. Such calculations can be performed on a laptop computer for a decent amount of times for up to eight qubits. As a measure of multipartite entanglement, we will use the generalization of the Meyer-Wallach measure \cite{Meyer2002} defined according to \cite{Scott2004} as
\begin{equation}
	Q_m=\frac{2^n}{2^n-1}\Big(1-\binom{n}{m}^{-1}\sum_{|A|=m} \Tr(\rho_A^2)\Big),\label{MW}
\end{equation}
where $\rho_A$ is the reduced density matrix of subsystem $A$ of size $m=1,2,\cdots,\floor{n/2}$ and where the sum is performed over all the subsystems of a given size $m$. For any $m$, this measure is zero for product states and maximal ($Q_m=1$) for perfect maximally multipartite entangled
states, even if reaching this maximum is known to be impossible for $n \ge 8$ \cite{Scott2004}.
\begin{table}
\begin{tabular}{| c | c | c | c | c |}
\hline
  n & Type of states & $\mean{\vec{Q}}$ & $\vec{Q}_{max}$ \\
\hline
\hline
 2  & Bell states & (1) &\\
    & Random states & (0.383) & (0.938)\\
    & $\ydiagram{2}$ & (0.505) & (1.00) \\
    & $\ydiagram{1,1}$ & (1.00) & (1.00)\\
\hline
 3  & $\ket{GHZ}$ & (1) & \\
    & $\ket{W}$ & (8/9) & \\
    & Random states & (0.640) & (0.897) \\
    & $\ydiagram{3}$ & (0.656) & (0.986) \\
    & $\ydiagram{2,1}$ & (0.657) & (0.815) \\
\hline
  4 & $\ket{HS}$ & (1.000,8/9) & \\
    & Random states & (0.811,0.692) & (0.926,0.807)\\
    & $\ydiagram{4}$ & (0.736,0.593) & (0.943,0.825) \\
    & $\ydiagram{3,1}$ & (0.852,0.705) & (0.938,0.832)  \\
    & $\ydiagram{2,2}$ & (1.000,0.726) & (1.000,0.884)\\
\hline
  5 & $\ket{M_5}$ & (1,1,1) & \\
    & Random states & (0.906,0.844) & (0.969,0.913) \\
    & $\ydiagram{5}$ & (0.791,0.659) & (0.957,0.853)\\
    & $\ydiagram{4,1}$ & (0.914,0.861) & (0.988,0.921)\\
    & $\ydiagram{3,2}$ & (0.870,0.788) & (0.943,0.858)\\
\hline
 6 & $\ket{M_6}$ & (1,1,1) & \\
   & Random states & (0.952,0.922,0.860) & (0.979,0.948,0.889)\\
   & $\ydiagram{6}$ & (0.817,0.691,0.625) & (0.964,0.848,0.776)\\
   & $\ydiagram{5,1}$ & (0.949,0.917,0.857) & (0.983,0.945,0.890)\\
   & $\ydiagram{4,2}$ & (0.947,0.896,0.832) & (0.980,0.935,0.868)\\
   & $\ydiagram{3,3}$ & (1.000,0.895,0.825)& (1.000,0.925,0.876) \\
\hline
\end{tabular}
\caption{Average and maximum value of $Q_m$, represented as the vectors $\mean{\vec{Q}}$ and $\vec{Q}_{max}$, are calculated using a sample of $10^3$ random states. ``Random states" indicates that the states are uniformly picked in the whole Hilbert space. Each Young diagram indicates that the states are uniformly picked in the subspace labeled by the corresponding diagram. Note that the left column is also filled with the analytical value for some states known as maximum of entanglement as $\ket{HS}$,  $\ket{M5}$ and  $\ket{M6}$ \cite{Laflamme1996,Higuchi2000,Brierley2007,Gour2010}.}
\end{table}
\begin{figure}
\begin{center}
\includegraphics[width=8.5cm]{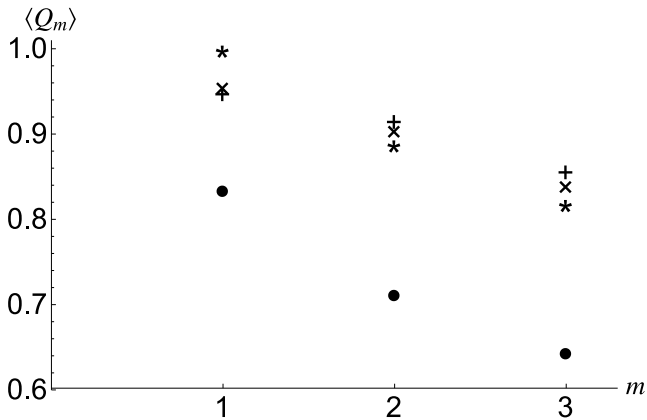}
\caption{For 6 qubits, comparison of the average values of $Q_m$ as a function of $m$ in the different $\la$-subspaces represented in the inverse lexicographical order by the markers $\bullet$, $+$, $\times$ and $*$, respectively.}
\label{fig3} 
\end{center}
\end{figure}
In Table I, the average and maximum values of $Q_m$ (represented as a vector with $\floor{n/2}$ components) are listed for quantum states of different sizes. For each size, we can compare those quantities for known maximally entangled states, random states picked uniformly in the whole Hilbert space (referred as ``random state"), and random states picked uniformly in the subspace $\Hcal_\la$ (referred by their corresponding Young diagrams). We notice that basically adding some antisymmetry to the states increase the multipartite entanglement on average and in maximum. The completely symmetric states appear to be the states with the lowest entanglement. We also notice that typical random states also contained a big amount of multipartite entanglement. It is due to the fact that on average random states are essentially antisymmetric (as $n$ goes to infinity) as observed in the previous part. Note that, in \cite{Scott2004}, an expression for the average of $Q_m$ for random states uniformly distributed over a given subspace is derived as a function of the projector in the subspace. In our context, such an average value for states picked in the subspaces $\Hcal_\la$ is given by
\begin{eqnarray}
	\mean{Q_m}_{\la}&=&\frac{2^m}{2^m-1}\Big[1-\mathcal{N}^{-1}\sum_{|A|=m} \Tr_A(\Tr_B(P_\la)^2)\nonumber\\
	&+&\left.\Tr_B(\Tr_A(P_\la)^2\right)\Big]\label{Qmean}
\end{eqnarray}
where $\mathcal{N}=D_\la(D_\la+1)\binom{n}{m}$. Such an expression can be evaluated on a computer the only limitation being the calculation of the projectors $P_\la$. For six qubits, the calculation of those average values can be calculated and are represented on the Fig. \ref{fig3}. As observed with the statistical calculation, the symmetric subspace contains {\em on average} less entanglement than the subspaces that are more anti-symmetric.

The more interesting fact concerns the states contained in the most antisymmetric subspace $\Hcal_{(n/2,n/2)}$ when $n$ is even. For those states up to the numerical precision, $Q_1$ seems to always reach its maximum. It means that those states are such that their 1-qubit reduced states are all maximally mixed. Note that states following such a property are called {\em 1-uniform} in \cite{Scott2004} or {\em 1-MM} in \cite{Arnaud2013}. It is an interesting fact because it is known that all entanglement monotones reach their maximum on the family of states that are 1-uniform \cite{Verstraete2003}. By definition, those states are also quantum error correcting code robust to any 1-qubit error. It is possible to create the projector $P_{(4,4)}$ in about 7 hours on a laptop computer and to check that all states in $\Hcal_{(4,4)}$ also satisfy $Q_1=1$ as observed in the most antisymmetric subspaces for $n=2,4$ and $6$. Therefore, it is worth trying to demonstrate analytically that this fact is general for any even $n$.

\subsection{Analytical approach}
In this part, we will be interested in the connection between multipartite entanglement and antisymmetry of a quantum state. It is illustrated by the following theorem:
\begin{mytheorem}
For an even number of qubits, all pure states contained in the most antisymmetric subspace are 1-uniform.
\end{mytheorem}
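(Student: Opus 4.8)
The plan is to identify the most antisymmetric subspace with the collective SU(2) singlet subspace and then to combine that invariance with Schur's lemma. For even $n$ the most antisymmetric partition is $\la=(n/2,n/2)$. For qubits the SU(2) irreducible representation carried by $\Hcal_\la$ has Weyl dimension $\lambda_1-\lambda_2+1=n-2\la+3$ (the quantity evaluated in Sec.~III), and at $(n/2,n/2)$ this equals $1$. A one-dimensional SU(2) representation is the trivial one, so the collective action fixes every vector of the subspace,
\[
U^{\otimes n}\ket{\psi}=\ket{\psi}\qquad\forall\,U\in\text{SU}(2),\ \forall\,\ket{\psi}\in\Hcal_{(n/2,n/2)}.
\]
Equivalently, $\Hcal_{(n/2,n/2)}$ is the total-spin-zero subspace; one may also see this directly, since each height-two column of the diagram antisymmetrizes a pair of qubits into a singlet, which is itself SU(2) invariant.

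The next step converts this global invariance into a statement about each single-qubit reduction. Fix a qubit $k$ and write $\rho_k=\Tr_{\bar k}(\ketbra{\psi}{\psi})$, with $\bar k$ the remaining qubits. Because $\ket{\psi}$ is fixed by $U^{\otimes n}$, the projector $\ketbra{\psi}{\psi}$ is invariant under conjugation by $U^{\otimes n}$, and the partial trace is covariant under this local-tensor unitary: tracing out $\bar k$ and using that $\Tr_{\bar k}$ is unaffected by a unitary acting only on $\bar k$, I would obtain
\[
U\,\rho_k\,U^{\dagger}=\rho_k\qquad\forall\,U\in\text{SU}(2).
\]
The point to emphasize is that it is the tensor-power structure $U^{\otimes n}$ (the same $U$ on each qubit) that leaves precisely the factor $U$ acting on qubit $k$, so the one-qubit state must commute with the whole group.

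Finally I would invoke Schur's lemma. The single-qubit space $\C^2$ is the irreducible spin-$\tfrac12$ representation, so any operator commuting with all of SU(2) is a multiple of $\Itwo$; the normalization $\Tr(\rho_k)=1$ then gives $\rho_k=\ud\,\Itwo$. Since $k$ was arbitrary, every one-qubit reduced state is maximally mixed, i.e.\ $\ket{\psi}$ is $1$-uniform, which is the assertion. In the projector language of the paper the same conclusion reads $P_\la\,\sigma^{(k)}P_\la=0$ for every traceless single-qubit operator $\sigma^{(k)}$: averaging over the Haar measure of SU(2) and using $U^{\otimes n}P_\la=P_\la$ together with $\int dU\,U\sigma U^{\dagger}=0$ makes all single-qubit expectation values vanish.

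The hard part is the first step---recognizing that ``most antisymmetric'' for qubits is exactly ``collective SU(2) singlet''. Once the trivial SU(2) action is established, partial-trace covariance and Schur's lemma are routine. The subtlety to handle carefully is that a bare permutation invariance of the subspace would not by itself force maximal mixedness; it is the stronger, tensor-power SU(2) symmetry that does the work, and pinning that down from the representation-theoretic data is where the argument really lives.
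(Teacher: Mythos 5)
Your proof is correct, and it takes a genuinely different route from the paper's. The paper argues combinatorially in the computational basis: every state of $\Hcal_{\las}$ with $\las=(n/2,n/2)$ is supported on balanced binary strings (Hamming weight $n/2$), so $\sigma_x^{(l)}$ and $\sigma_y^{(l)}$ map such a state into an orthogonal weight sector, the $\sigma_z^{(l)}$-rotated state is argued to be orthogonal as well, and 1-uniformity then follows from the criterion $\bra{\psi}\sigma_\alpha^{(l)}\ket{\psi}=0$ for all $l$ and $\alpha$, quoted from \cite{Arnaud2013}. You instead extract from Schur--Weyl duality that the SU(2) factor of $\Hcal_{\las}$ has Weyl dimension $\lambda_1-\lambda_2+1=1$ (note that the formula of Eq.~(\ref{dimSn}) is indeed the SU(2) Weyl dimension even though the paper's text labels it the $S_n$ dimension; your reading is the mathematically correct one, and the product $D_\la$ is unaffected by the swap), hence that it carries the trivial representation, since SU(2) admits no nontrivial one-dimensional representation; the collective invariance $U^{\otimes n}\ket{\psi}=\ket{\psi}$, partial-trace covariance, and Schur's lemma then force $\rho_l=\ud\,\Itwo$ for every qubit $l$. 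Your structural route buys three things. First, it treats the three Pauli directions on the same footing: in the paper's proof, $\sigma_z^{(l)}$, unlike $\sigma_x^{(l)}$ and $\sigma_y^{(l)}$, does not change the Hamming weight, so the orthogonality of the $\sigma_z$-rotated state to $\Hcal_{\las}$ cannot be a weight-sector argument and is essentially asserted there; your Haar-twirl identity $P_{\las}\,\sigma^{(l)}P_{\las}=0$ for traceless $\sigma$ closes exactly this point cleanly. Second, it proves the stronger statement that every state of $\Hcal_{\las}$ is invariant under collective rotations, not merely 1-uniform. Third, it generalizes verbatim to qudits, where the rectangular partition $(n/d,\dots,n/d)$ carries the trivial SU$(d)$ representation. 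What the paper's computation buys in exchange is elementarity---no Haar measure or Schur's lemma---together with an explicit description of the computational-basis support of the singlet subspace.
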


\begin{proof}
Let us first write the most antisymmetric partition $(n/2,n/2)$ as $\las$. By definition the basis states $\ket{b^{\las}_k}$ of $\Hcal_{\las}$ are linear combinations of computational states $\ket{i}$ with $i$ having a binary form that contains as much 0 than 1.
It is related to the fact that, in this situation, it always exists permutations that lead to the Young diagram filling $\dhalf$ that correspond to non zero-antisymmetrization. Those states can be written in a compact form as
\begin{equation}
\ket{b^{\las}_k}=\sum_{i_1 \cdots i_{n/2}}c_{\mathbf i}\ket{i_1 \cdots i_{n/2} \bar{i}_1 \cdots \bar{i}_{n/2}},\nonumber
\end{equation}
where the bar on the indices indicates binary complement, i.e., $\bar{0}=1$ and $\bar{1}=0$ and for some coefficients $c_{\mathbf i}=c_{i_1 \cdots i_{n/2} \bar{i}_1 \cdots \bar{i}_{n/2}}$. With the ``barred" notation the action of the Pauli matrix are given by
\begin{eqnarray}
	\sigma_x\ket{i}&=&\ket{\bar{i}},\nonumber\\
	\sigma_y\ket{i}&=&\mathrm{i}(-1)^{i}\ket{\bar{i}},\nonumber\\
	\sigma_z\ket{i}&=&(-1)^{i}\ket{\bar{i}},\nonumber\\
\end{eqnarray}
Let us now consider the ``rotated" states $\sigma_x^{(k)}\ket{b^{\las}_k}$, where $\sigma_x^{(l)}$ indicates that a Pauli-$x$ matrix is applied on the $l$th qubits.  Without lost of generality, by fixing $l=1$ the rotated states take the form
\begin{equation}
\sigma_x^{(1)}\ket{b^{\las}_k}=\sum_{i_1 \cdots i_{n/2}}c_{\mathbf i}\ket{\bar{i}_1 \cdots i_{n/2} \bar{i}_1 \cdots \bar{i}_{n/2}}.\nonumber
\end{equation}
It is a linear combination of computational states with a binary form that contains one extra 0 or 1 making the Young diagram filling unbalanced. In other words, $\sigma_x^{(k)}\ket{b^{\las}_k}$ is rotated in a subspace orthogonal to $\Hcal_{\las}$. A similar argument can be applied to a rotation performs by a $y$-Pauli matrix $\sigma_y^{(k)}$. The argument is similar for the rotation performed by the $z$-Pauli matrix $\sigma_y^{(k)}$ excepted that in this case the rotated states read
\begin{equation}
\sigma_z^{(1)}\ket{b^{\las}_k}=\sum_{i_1 \cdots i_{n/2}}(-1)^{i_1}c_{\mathbf i}\ket{i_1 \cdots i_{n/2} \bar{i}_1 \cdots \bar{i}_{n/2}},\nonumber
\end{equation}
which is also orthogonal to all the state $\ket{b^{\las}_k}$. As a consequence, all the scalar products $\bra{b^{\las}_k}\sigma_\alpha^{(l)}\ket{b^{\las}_k}$ are always zero ($\alpha=1,2$ or $3$) implying that all average values $\bra{\psi_{\las}}\sigma_{\alpha}\ket{\psi_{\las}}$ vanish for any state $\ket{\psi_{\las}} \in \Hcal_{\las}$. Finally, in \cite{Arnaud2013}, it is explained that an $n$-qubit state $\ket{\psi}$ is 1-uniform if and only if it satisfies
	\begin{equation}
		\bra{\psi}\sigma_{\alpha}^{(l)}\ket{\psi}=0.\label{1uniform}
	\end{equation}
for all $l\in[0,n]$ and $\alpha \in [1,3].$ 
\end{proof}
It is interesting to notice that this result looks like the opposite of theorem 1 in \cite{Arnaud2013} that says that {\em all pure states of qubits contained in the most symmetric subspace are as best 1-uniform}.

\section{Conclusion}
We investigated the $\la$-symmetries of a system of qubits in the context of quantum information. We proved that, on average, pure states of qubits picked at random with respect to the uniform measure on the unit sphere of the Hilbert space are almost as much antisymmetric as they are allowed to be. We then observed that multipartite entanglement, which is measured by the generalized Meyer-Wallach measure, tends to be larger in subspaces that are more antisymmetric than the complete symmetric one. Eventually, we proved that all states contained in the most antisymmetric subspace are relevant multipartite entangled states in the sense that their 1-qubit reduced states are all maximally mixed.

Following this present study, several research directions are possible. For instance, it would be interesting to find a more efficient way to construct projector Eq. (\ref{proj}) and perform numerical calculations for bigger $n$. Similarly, one could try to analytically calculate the average value of the measure $Q_m$ according to Eq. (\ref{Qmean}). One could also try to observe both analytically and numerically how the quantities $w_\la$ evolve when a quantum state is transformed by local unitaries.

\section*{Acknowledgments} I would like to thank Daniel Braun and Alicia Hartgrove for having read carefully the manuscript and for their useful comments. I would also like to thank Laure Benhamou and Ghyslain Protoy for their support.
\begin{widetext}
	\section*{Appendix} 
	\subsection*{Frobenius’ formula} The Frobenius’ formula \cite{Goodman2009} allows us to calculate $\chi_{\la}(\rho)$, the character in the representation $\lambda$ of each permutation that belongs to the conjugacy class $\rho$. To do so, the following quantities need to be introduced: 
\begin{itemize}
\item[-] The independent variables $x_1$, $\cdots$, $x_k$ where $k$ is at least as large as the last non-zero entry in the partition $\lambda$. 
\item[-] The power sums $P_j(x)=x_1^j+\cdots+x_k^j$ with $1\le j\le n$. 
\item[-] The discriminant $\Delta (x)=\prod_{i<j}(x_i-x_j)$. 
\item[-] The indices $l_i=\lambda_i+k-i$. 
\item[-] The polynomial $Q_{\la}^{\rho}(x)=\left[\Delta (x) \prod_{j=1}^{n}P_j(x)^{\rho^j}\right]$. 
\end{itemize}
Once the polynomial $Q_{\la}^{\rho}(x)$ is constructed, the Frobenius’ formula simply states that 
\begin{equation}
	\chi_{\la}(\rho)=\text{coefficient of } x_1^{l_1}\cdots x_k^{l_k} \text{ in } Q_{\la}^{\rho}(x). 
\end{equation}
Other formulas to calculate those characters exist, like the {\em Murnaghan-Nakayama rule} \cite{Loehr2010}.
	
\subsection*{$k$th moment of the distribution of $\mu_k$}
	
The $k$th moment $\mu_k$ of the distribution $p(w_\la)$ is given by 
\begin{equation}
	\mu_k=\sum_{\substack{i_1 \cdots i_k\\j_1 \cdots j_k}}\langle\psi_{i_1}^*\cdots\psi_{i_k}^*\psi_{j_1}\cdots\psi_{j_k}\rangle (P_\la)_{i_1j_1}\cdots(P_\la)_{i_kj_k}=\sum_{\pi \in S_k}\sum_{i_1 \cdots i_k}\langle|\psi_{i_1}|^2\cdots|\psi_{i_k}|^2\rangle (P_\la)_{i_1i_{\pi(1)}}\cdots(P_\la)_{i_ki_{\pi(k)}}.\nonumber\\
\end{equation}
Our goal is to express $\mu_k$ as a function of $\mu_{k-1}$. To do so, we will explicitly perform one of the sums, the one on the index $i_k$. Note that this choice is arbitrary. In the same vein, we will reduce the order of the mean value of the $\psi_i$. It is possible because those mean values do not depend on the indices $i$. From the expression in Eq. (\ref{F}), such a reduction gives 
\begin{equation}
	\langle|\psi_{i_1}|^2\cdots|\psi_{i_k}|^2\rangle=\frac{\langle|\psi_{i_1}|^2\cdots|\psi_{i_{k-1}}|^2\rangle}{N+k-1}.\nonumber 
\end{equation}
To perform the sum on $i_k$, it is judicious to split the symmetric group $S_k$ that acts on the indices $i$ in $k$ disjoined sets $S^{(j)}$ 
\begin{equation}
	S^{(j)}=\{\pi \in S_k \mid \pi(j)=k\}\, \forall j\in[1,k]. 
\end{equation}
By definition $\bigcup_{j}S^{(j)}=S_k$ and $\bigcap_{j}S^{(j)}=\emptyset$. Therefore, $\mu_k$ goes as 
\begin{eqnarray}
	\mu_k&=&\sum_{j=1}^k\sum_{\pi \in S^{(j)}}\sum_{i_1 \cdots i_k}\frac{\langle|\psi_{i_1}|^2\cdots|\psi_{i_k-1{}}|^2\rangle}{N+k-1} (P_\la)_{i_1i_{\pi(1)}}\cdots(P_\la)_{i_ki_{\pi(k)}}\nonumber\\
	&=&\frac{1}{N+k-1}\Big\{\sum_{j=1}^{k-1}\sum_{\pi \in S^{(j)}}\sum_{i_1 \cdots i_{k-1}}\langle|\psi_{i_1}|^2\cdots|\psi_{i_{k-1}}|^2\rangle (P_\la)_{i_1i_{\pi(1)}}\cdots[\sum_{i_k}(P_\la)_{i_ji_k}(P_\la)_{i_ki_{\pi(k)}}]\Big.\nonumber\\
	&+&\Big.\sum_{\pi \in S^{(k)}}\sum_{i_1 \cdots i_{k-1}}\langle|\psi_{i_1}|^2\cdots|\psi_{i_{k-1}}|^2\rangle (P_\la)_{i_1i_{\pi(1)}}\cdots[\sum_{i_k}(P_\la)_{i_ki_k}]\Big\}.\nonumber
\end{eqnarray}	
From the definition of the projectors $P_\la$, the sum over $i_k$ gives 
\begin{eqnarray}
	&\,&\sum_{i_k}(P_\la)_{i_ji_k}(P_\la)_{i_ki_{\pi(k)}} = (P_\la)_{i_ji_{\pi(k)}},\nonumber\\
	&\,&\sum_{i_k}(P_\la)_{i_ki_k} = \Tr(P_\la)=D_\la.\nonumber 
\end{eqnarray}
The reduction from the sum over $i_k$ also implies that each set $S^{(j)}$ becomes a group $S_{k-1}$ acting on the symbols $1$ to $k-1$. It can be seen by just considering the cycle structure of the elements contained in a set $S^{(j)}$. By definition, those elements are built with cycles of the form $(\cdots k j \cdots)$ multiplied by all the possible other cycles built with the remaining symbols. After removing the symbol $k$ in the cycle because of the summation, the previous set becomes simply the combination of all cycles built from $k-1$ symbols i.e. the group $S_{k-1}$. Then, realizing that the $(k-1)$ terms in the sum over $j$ are all equal, it follows
\begin{eqnarray}
	\mu_k&=&\frac{1}{N+k-1}\Big((k-1)\sum_{\pi \in S_{k-1}}\sum_{i_1 \cdots i_{k-1}}\langle|\psi_{i_1}|^2\cdots|\psi_{i_{k-1}}|^2\rangle (P_\la)_{i_1i_{\pi(1)}}\cdots(P_\la)_{i_{k-1}i_{\pi(k-1)}}\Big.\nonumber\\
	&+&\Big.D_\la\sum_{\pi \in S_{k-1}}\sum_{i_1 \cdots i_{k-1}}\langle|\psi_{i_1}|^2\cdots|\psi_{i_{k-1}}|^2\rangle (P_\la)_{i_1i_{\pi(1)}}\cdots(P_\la)_{i_{k-1}i_{\pi(k-1)}})\Big).\nonumber 
\end{eqnarray}
We can then identify the expression of $\mu_{k-1}$ 
\begin{equation}
	\mu_k=\frac{1}{N+k-1}\Big((k-1)\mu_{k-1}+D_\la\mu_{k-1}\Big)=\frac{(D_\la+k-1)}{(N+k-1)}\mu_{k-1}.\nonumber 
\end{equation}
By induction, using Eq. (\ref{w}) $\mu_1=D_\la / N$, it proves the result of Eq. (\ref{wk})
\begin{equation}
	\mu_k=\frac{(D_\la)_{(k)}}{(N)_{(k)}}.\nonumber
\end{equation}
\end{widetext}


\end{document}